\newtheorem{definition}{Definition}
\newtheorem{problem}{Problem}
\newtheorem{theorem}{Theorem}
\newtheorem{remark}{Remark}
\title{\LARGE \bf
Compositional Synthesis of Decentralized Robust Set-Invariance Controllers for Large-scale Linear Systems
}
\author{Kasra Ghasemi, Sadra Sadraddini, and Calin Belta
\thanks{This work was partially supported by the NSF under grants CPS- 1446151 and IIS-1723995}
\thanks{K. Ghasemi and C. Belta are with the Division of System Engineering, Boston University,
         Boston, MA 02215, USA 
        \texttt{kasra0gh@bu.edu},\,\texttt{cbelta@bu.edu}}%
\thanks{S. Sadraddini is with the Computer Science and Artificial Intelligence Laboratory, Massachusetts Institute of Technology,
        Cambridge, Massachusetts 02139, USA
        \texttt{sadra@mit.edu}}%
}
\DeclareMathOperator{\diag}{{diag}}
\DeclareMathOperator{\dist}{{dist}}
\begin{document}

\maketitle
\thispagestyle{empty}
\pagestyle{empty}

\begin{abstract}
Ensuring constraint satisfaction in large-scale systems with hard constraints is vital in many safety critical systems. The challenge is to design controllers that are efficiently synthesized offline, easily implementable online, and provide formal correctness guarantees. In this paper, we provide a method to compute correct-by-construction controllers for a network of coupled linear systems with additive bounded disturbances such that i) the design of the controllers is fully compositional - we use an optimization-based approach that iteratively computes  subsystem-level assume-guarantee contracts in the form of robust control invariant sets; and ii) the  controllers are decentralized hence online implementation requires only the local state information. We present illustrative examples, including a case study on a system with 1000 dimensions.  
\end{abstract}

\section{Introduction}
\label{sec:intro}

Hard constraints on state and control exist in many systems, where the controller should keep the state within a certain safe region using the admissible control inputs, no matter how the disturbances hit the system. Examples include collision avoidance in autonomous driving \cite{al2010experimental}, capacity bounds in inventory management \cite{Borrelli2009}, safety thresholds in anesthesia \cite{yousefi2019formalized}, and temperature bounds in heating, ventilation, and air conditioning (HVAC) systems \cite{afram2014theory}.

It can be shown that the infinite-time constraint satisfaction problem is equivalent to finding robust control invariant (RCI) sets \cite{Blanchini1999,kerrigan2001robust} in the state-space of the system. Existence of an RCI set  guarantees existence of a control policy that keeps the state in the set for all times and for all allowable disturbances. Given an RCI set, extracting the set-invariance controller is straightforward. RCI sets are commonly used to guarantee recursive feasibility in model predictive controllers (MPC) \cite{kerrigan2001robust} (as the terminal constraint of the MPC optimization problem), or act as the set bounding the trajectory tracking error in tube MPC \cite{Rakovic2,mayne2011tube, yu2013tube,rawlings2009model,Rakovicrakovic,Mayne2014}. 

We focus on two major shortcomings in computing and using RCI sets for large-scale systems. First, their computation is expensive. Even though techniques for computing RCI sets for linear systems can be mapped to convex optimization problems \cite{Rakovic2007,rakovic2010parameterized}, they are still computationally expensive for very large problems \cite{sadraddini2017provably}. Second, even when an RCI set is available for a large-scale system, the controller is essentially centralized as it requires the global state knowledge. 

We consider large-scale linear systems with set-valued additive disturbances and provide a fully compositional method to design decentralized set-invariance controllers. The novelty of the method lies in the fact that the coupling terms between subsystems are treated as set-valued disturbances, and the method iteratively shrinks these coupling effects such that the obtained RCI sets are small, hence subsystems would operate closely to their nominal points or trajectories. The RCI sets are parameterized as zonotopes and act as assume-guarantee contracts between subsystems. We use zonotope order reduction methods \cite{Kopetzki2018, Yang2018} to limit the complexity growth of zonotopes over the iterations, which terminates when the obtained RCI sets converge.


{\bf Related work}
The efficient computation of formally correct controllers for large-scale systems is an active area of research.  
Designing assume guarantee contracts for monotone dynamical systems was studied in \cite{kim2015compositional,kim2016directed}. For safety specifications, assume guarantee contracts become closely related to RCI sets \cite{sadraddini2017formal}. 
A close work to this paper is \cite{Nilsson2016}, where the authors formulated the problem of finding separable, decentralized RCI sets for linear systems with constant linear feedback gains using linear matrix inequalities (LMIs). While this method yields decentralized RCI sets and controllers, it is not compositional and the underlying LMI is solved for the whole system. We applied our method to an example in \cite{Nilsson2016} and compared the results. The authors in \cite{scialanga2018robust} formulated a similar problem and provided a solution similar to the one in this paper in the respect that they also treated the coupling terms as disturbances. However, the key difference is that they considered coupling terms to be fixed - each subsystem was assumed to be able to take state and controls from the whole set of its admissible values. This assumption leads to very strong couplings and large RCI sets or even infeasibility. In reality, the coupling terms can be made smaller by restricting the state and controls of each subsystem to take values from smaller sets in their admissible sets. The question remains how to choose these sets in a ``correct" and compositional way. This paper follows this line of idea as we iteratively shrink the coupling effects to obtain smaller RCI sets.

\section{Notation and Preliminaries}
\label{sec:notation}
The $n$-dimensional Euclidean space is denoted by $\mathbb{R}^n$. We use $N$ to denote the set of positive integers and define $N_q:=\{x \in N, x \le q\}, q \in N$. The \textit{Minkowski} sum of two sets $\mathcal{A} \subset \mathbb{R}^n$ and $\mathcal{B} \subset \mathbb{R}^n$ is defined as $\mathcal{A} \oplus\mathcal{B}:=\{a+b|a\in \mathcal{A}, b\in\mathcal{B}\}$.
The diagonal matrix obtained from the elements of a vector $h \in \mathbb{R}^n$ is denoted by $\diag(h) \in \mathbb{R}^{n \times n}$. The $m$ by $n$ zero matrix is denoted by $0_{m\times n}$ and identity matrix of size $n$ is shown by $I_n$. Given $A \in \mathbb{R}^{n \times m}$, the 1-norm of matrix $A$ is denoted by $||A||_1$ and is equal to $\sum_{i=1}^n\sum_{j=1}^m{|a_{ij}|}$, where $a_{ij}$ is the element of matrix $A$ in $i$th row and $j$th column. The element-wise absolute value of $A$ is denoted by $|A|$. Given a set $\mathcal{X} \subset \mathbb{R}^n$ and $A \in \mathbb{R}^{m \times n}$, we define $A\mathcal{X}:=\{Ax| x \in \mathcal{X}\}$. Given two matrices  $A_1, A_2$ with the same number of rows, the matrix obtained from their horizontal concatenation is denoted by $[A_1,A_2]$.

A \emph{zonotope} $\mathcal{Z}(c,H) \subset \mathbb{R}^n$ is defined as:
\begin{equation} \label{eq:1}
    \mathcal{Z}(c,H):= \{c\} \oplus H\mathbb{B}_p,
\end{equation}
where $H \in \mathbb{R}^{n\times p}$. The columns of $H$ are called \emph{generators}, $c \in \mathbb{R}^n$ is its center, and $\mathbb{B}_p:=\{x\in \mathbb{R}^p| ||x||_\infty \leq 1\}$. The zonotope order is defined as $(\dfrac{p}{n})$. The Minkowski sum of two zonotopes in $\mathbb{R}^n$ can be written as:
\begin{equation} \label{eq:2}
    \mathcal{Z}(c_1,H_1) \oplus \mathcal{Z}(c_2,H_2) = \mathcal{Z}(c_1+c_2,[H_1,H_2]).
\end{equation}

\addtolength{\textheight}{-3cm}   


Consider a discrete-time linear system in the form:
\begin{equation}\label{eqn:system}
    x(t+1) = Ax(t)+Bu(t)+d(t),
\end{equation}
where $x\in X$, $u\in U$, and $d\in D$. The sets $X,D \subset \mathbb{R}^n$ and $U \subset \mathbb{R}^m$ define constraints for the state, disturbance, and control, respectively. The matrices $A \in \mathbb{R}^{n\times n}$ and $B \in \mathbb{R}^{n \times m}$ are constant and $t \in N$ denotes the discrete time.

\begin{definition}[RCI set] An RCI set
for system (\ref{eqn:system}) is a set $\Omega \subseteq X$ for which there exists at least one controller that keeps the state of the system in that set for the next time step, for all allowable disturbances:
\begin{equation} \label{eq:4}
    \forall x(t)\in \Omega ,   \exists u(t)\in U, \forall d(t)\in D \Rightarrow x(t+1)\in \Omega.
\end{equation}
\end{definition}

\section{Problem Formulation}
\label{sec:formulation}
In this section, we formalize the problem of set-invariance controller design for a network of dynamically coupled discrete-time linear systems that have the following form:
\begin{multline}
\label{eq_subsystems}
    x_i(t+1)=A_{ii}x_i(t) + B_{ii}u_i(t) + \sum_{j\ne i}{A_{ij}x_j(t)} + \\ \sum_{j\ne i}{B_{ij}u_j(t)}+ d_i(t),
\end{multline}
where $x_i(t)\in X_i, X_i \subset  \mathbb{R}^{n_i}$, $u_i(t) \in U_i, U_i \subset  \mathbb{R}^{m_i}$, and $d_i(t)\in D_i, D_i \subset \mathbb{R}^{n_i}$ are the state, control input, and disturbance for subsystem $i$, respectively; $A_{ii} \in \mathbb{R}^{n_i \times n_i}$ and $B_{ii} \in \mathbb{R}^{n_i \times m_i}$ characterize the internal dynamics of subsystem $i$; $A_{ij} \in \mathbb{R}^{n_i \times n_j}$ and $B_{ij} \in \mathbb{R}^{n_i \times m_j}$ characterize the coupling effects of subsystem $j$ on subsystem $i$; and $i \in \mathcal{I}$, where $\mathcal{I}$ is the index set for subsystems.

\begin{problem}\label{prob:main}
Given system \eqref{eq_subsystems}, find sets $\Omega_i$ and $\Theta_i$ for all $i \in \mathcal{I}$ such that $\Omega_i \subseteq X_i$, $\Theta_i \subseteq U_i$, and
\begin{multline} \label{eq:subrci}
        \forall x_i(t) \in \Omega_i , \exists u_i(t)\in \Theta_i , \forall x_j(t)\in \Omega_j , \forall u_j(t)\in \Theta_j\\ (j\ne i) , \forall d_i(t) \in D_i \Rightarrow x_i(t+1)\in \Omega_i.
\end{multline}
\end{problem}

Throughout the paper, we assume that 
the matrix pairs ($A_{ii},B_{ii}$) are controllable for all $i\in \mathcal{I}$ and the sets $X_i$, $U_i$, and $D_i$ are $\mathcal{Z}(0, G_i^{x})$, $\mathcal{Z}(0, G_i^{u})$, and $\mathcal{Z}(0, G_i^{d})$, respectively, where matrices $G_i^x \in \mathbb{R}^{n_i \times p_i^x}$, $G_i^u \in \mathbb{R}^{m_i \times p_i^u}$, and $G_i^d \in \mathbb{R}^{n_i \times p_i^d}$ are given. Note that this assumption is not restrictive in most common problems, since zonotopes are flexible objects that approximate most symmetric sets. Also, as it will become clear later in the paper, by under-approximating $X_i$ and $U_i$ and over-approximating $D_i$, our solution to Problem \ref{prob:main} remains correct but becomes conservative. 

\section{Single RCI set Computation}
\label{sec:single}
In this section, we provide a method to find an RCI set for a single system. The extracted controller is centralized and its implementation requires a convex program - it can be shown that the controller can be explicitly represented as a piecewise affine function of the state, although we seldom find it useful to compute it. The following method provides a bound on this extracted controller. The bound is named action set and is denoted by $\Theta$. The RCI set computation method presented here is closely related to the one in \cite{Rakovic2007}.

\begin{theorem} \label{thrm}
Let system (\ref{eqn:system}) with constraints on the state and input control and $D = Z(0,G^d)$, where $G^d \in \mathbb{R}^{n\times p}$. If $\exists k \in N $, and matrices $ T\in \mathbb{R}^{n \times k}$ and $M\in \mathbb{R}^{m \times k}$ such that the following relation holds:\\
\begin{equation} \label{eq:thrm1}
    [AT+BM , G^d] = [0_{n\times p} , T]
\end{equation}
\begin{equation} \label{eq:thrm2}
    \mathcal{Z}(0,T) \subseteq X
\end{equation}
\begin{equation} \label{eq:thrm3}
    \mathcal{Z}(0,M) \subseteq U,
\end{equation} 
then $\Omega=\mathcal{Z}(0,T)$ is an RCI set.
\end{theorem}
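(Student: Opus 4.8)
The plan is to verify directly the two conditions defining an RCI set in \eqref{eq:4}: that $\Omega := \mathcal{Z}(0,T) \subseteq X$, and that for every $x(t) \in \Omega$ there is an admissible input driving the successor state back into $\Omega$ for every admissible disturbance. The inclusion $\Omega \subseteq X$ is exactly \eqref{eq:thrm2}, so the work is in the second condition, which I would establish by exhibiting an explicit set-invariance controller built from the generator matrices $T$ and $M$ --- the same parameterization idea underlying \cite{Rakovic2007}.

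First I would fix an arbitrary $x(t) \in \Omega$. By the definition of a zonotope there exists $\zeta \in \mathbb{B}_k$ with $x(t) = T\zeta$ (if several such $\zeta$ exist, pick any one). I would then set the control input to $u(t) := M\zeta$; by \eqref{eq:thrm3} this satisfies $u(t) \in \mathcal{Z}(0,M) \subseteq U$, so it is admissible, and this simultaneously identifies the action set $\Theta = \mathcal{Z}(0,M)$.

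Next I would evaluate the closed-loop successor. Fix any $d(t) \in D = \mathcal{Z}(0,G^d)$ and write $d(t) = G^d\eta$ with $\eta \in \mathbb{B}_p$. Substituting into \eqref{eqn:system},
\begin{equation*}
x(t+1) = AT\zeta + BM\zeta + G^d\eta = (AT+BM)\zeta + G^d\eta = [\,AT+BM,\,G^d\,]\begin{bmatrix}\zeta\\\eta\end{bmatrix}.
\end{equation*}
Because $||\zeta||_\infty \le 1$ and $||\eta||_\infty \le 1$, the stacked vector lies in $\mathbb{B}_{k+p}$, hence $x(t+1) \in \mathcal{Z}(0,[AT+BM,G^d])$. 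Invoking \eqref{eq:thrm1} --- read as an equality of the zonotopes generated by the two matrices, using that a generator matrix is unaffected by column permutations and by deletion of zero columns, together with the concatenation/Minkowski-sum identity \eqref{eq:2} --- gives $\mathcal{Z}(0,[AT+BM,G^d]) = \mathcal{Z}(0,[0_{n\times p},T]) = \mathcal{Z}(0,T) = \Omega$. Thus $x(t+1) \in \Omega$ for every admissible $d(t)$, and since $x(t) \in \Omega$ was arbitrary, \eqref{eq:4} holds; combined with $\Omega \subseteq X$ this proves $\Omega$ is an RCI set.

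The step that needs the most care is the use of \eqref{eq:thrm1}. One must treat it as an identity between the generated zonotopes rather than an entrywise matrix identity, and --- crucially --- keep in mind that $\zeta$ is held fixed while the disturbance coordinate $\eta$ sweeps all of $\mathbb{B}_p$; this is why the relevant object is the Minkowski sum $(AT+BM)\mathbb{B}_k \oplus G^d\mathbb{B}_p$, i.e. exactly the left-hand zonotope of \eqref{eq:thrm1}, and why one cannot replace it by $A\Omega \oplus B\Theta \oplus D$ (which would decouple the $T$- and $M$-selections tied to the same $\zeta$). Beyond that, the argument is a routine substitution and requires no rank or structural hypothesis on $T$.
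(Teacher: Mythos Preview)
Your argument is correct and follows essentially the same route as the paper's proof: parameterize $x(t)=T\zeta$, choose $u(t)=M\zeta$, and show the successor lies in $T\mathbb{B}_k$ via the concatenated generator identity. One small clarification: in the paper \eqref{eq:thrm1} is imposed as a literal entrywise matrix equality (both sides are $n\times(k+p)$ matrices), which is \emph{sufficient} for the zonotope equality you use; so your caveat that ``one must treat it as an identity between the generated zonotopes rather than an entrywise matrix identity'' is unnecessary --- the matrix reading is exactly what is assumed, and it makes the last step even shorter, since $[AT+BM,G^d]\begin{bmatrix}\zeta\\\eta\end{bmatrix}=[0_{n\times p},T]\begin{bmatrix}\zeta\\\eta\end{bmatrix}\in T\mathbb{B}_k$ directly.
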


\begin{proof}
Note that the structure of matrices $T$ and $M$ is not unique and the value of $k$ can be changed to derive different $T$ and $M$. This enables iterations over different $k$.\\
Substituting $x= T b$ and $u=M b$ where $b \in \mathbb{B}_k$ in (\ref{eqn:system}):
\def\C{
\begin{bmatrix}
    \mathbb{B}\\
    \mathbb{B}\\
\end{bmatrix}}
\def\B{
\begin{bmatrix}
    \omega_1\\
    \omega_2\\
\end{bmatrix}}
\begin{align*}
    x(t+1)      &= A(T b)+B(M b)+d\\
                &\in (AT+BM) \mathbb{B}_k\oplus G^d \mathbb{B}_p\\
                &= [AT+BM,G^d] \mathbb{B}_{(k+p)}
\end{align*}
If $\mathcal{Z}(0,T)$ is an RCI set, $x(t+1)$ should be in the set $ T\mathbb{B}_k$ and it can happen if the following relation is satisfied:
\begin{equation} \label{eq:4.4}
    [AT+BM,G^d] \mathbb{B}_{(k+p)}=T\mathbb{B}_{(k)}.
\end{equation}
We can augment the right hand side of (\ref{eq:4.4}) by adding $p$ zero vector to the left of the matrix $T$ to equate the dimensions of left hand matrices on both sides and reach:\\
\begin{equation} \label{eq:extend1}
    [AT+ BM , G^d] \mathbb{B}_{(k+p)} =  [0_{n\times p},T] \mathbb{B}_{(k+p)}.
\end{equation}
If $T$ and $M$ satisfy the condition below, it implies that (\ref{eq:extend1}) is satisfied and $\mathcal{Z}(0,T)$ is the RCI set:
\begin{equation} \label{eq:extend2}
    [AT+ BM , G^d] =  [0_{n\times p},T].
\end{equation}
State and control input constraints force (\ref{eq:thrm2}) and (\ref{eq:thrm3}).
\end{proof}
 
 Constraints (\ref{eq:thrm2}) and (\ref{eq:thrm3}) can be added, together with other constraints, to an optimization problem for the calculation of the RCI set. We refer the readers to \cite{Sadraddini2} where the authors presented a set of linear constraints for the zonotope containment problem. The objective is to minimize $|| T ||_1$, means minimizing all the elements of matrix $T$. It is a suitable linear heuristic to make the volume of RCI set $\mathcal{Z}(0,T)$ small. Other heuristics may also be employed. Thus, we find the RCI set by the following linear program (LP):
\begin{equation} \label{eq:optmz}
\begin{aligned}
& \underset{T,M}{\text{argmin}}
& & || T ||_1 \\
& \text{subject to}
& & [AT+ BM , G^d] =  [0_{n\times p},T] \\
&&& \mathcal{Z}(0,M) \subseteq \mathcal{Z}(0,G^u)\\
&&& \mathcal{Z}(0,T) \subseteq \mathcal{Z}(0,G^x).\\
\end{aligned}
\end{equation}
 The final algorithm to find the RCI set ($\Omega$) and the action set ($\Theta = \mathcal{Z}(0,M)$)
 is shown in Algorithm \ref{RCI}.

\makeatletter
\def\BState{\State\hskip-\ALG@thistlm}
\makeatother

\begin{algorithm}
\caption{Single RCI}\label{RCI}
\begin{algorithmic}[1]
\Require {$A,B,G^x,G^u,G^d$}
\State Initializing: $k = 1$ 
\While {False infeasibility}
\State $T, M \leftarrow $ Solving Optimization Problem (\ref{eq:optmz})
    \State $k \leftarrow k+1$
\EndWhile
\Return $T , M$

\end{algorithmic}
\end{algorithm}


 \begin{remark}
 \label{rmrk_dif}
 Both methods in Theorem \ref{thrm} and \cite{Rakovic2007} are able to derive the same RCI sets albeit they are different in formulations and the choice of variables. First, the authors in \cite{Rakovic2007}, considered the matrices $T \in \mathbb{R}^{n \times pk}$ and $M \in \mathbb{R}^{m \times pk}$ in the form of $[T_1 ,\cdots, T_{k}]$ and $[M_1 ,\cdots, M_{k}]$, where each $T_q\in \mathbb{R}^{n \times p}$ and $M_q \in \mathbb{R}^{m \times p}$ $(q \in N_k)$, respectively. This implies that the number of columns in $T$ is a multiplication of $p$, whereas this is not necessarily the case in our formulation. Second, the RCI set in \cite{Rakovic2007} is in the form of Minkowski sum of mapped disturbance sets, $\bigoplus_{q=1}^k{T_qD}$.
 \end{remark}

\section{Iterative Compositional RCI set Synthesis}
\label{sec:iterative}
Now, we present the main contribution of this paper. The main idea is to treat state and control couplings acting on a subsystem as disturbances. It is irrational to characterize such  disturbance as caused by the whole admissible set of the states and controls of other subsystems. Instead, the other subsystems only operate in a limited set of states and using a limited set of control inputs. So, couplings are smaller and essentially  depend on the whole system behavior. Achieving it needs a circular method and we start this circularity by initializing by some arbitrary sets of admissible states and controls and shrinking these sets using an iterative measure. A natural candidate for initial sets are (scalar multiplications of) the whole set of admissible values. 
The obtained RCI sets and its corresponding set of control inputs are then used as new constraints and couplings, which can be used to reiterate the computation of RCI sets. This procedure continues and all of the RCI sets will be recomputed for each subsystem individually. The resulting RCI sets and their corresponding admissible control sets shrink by construction. Thus, the successive iterations lead to convergence. 
This method is effectively constructing assume-guarantee contracts in the form of RCI sets, where we declare assumptions on states and control inputs of coupled subsystems $j$ and in return guarantee that the state and the control input of the subsystem $i$ remain inside certain sets.
\subsection{State and Control Input Coupling as Disturbance}
In order to address the aforementioned problem in a compositional manner where each subsystem is solved independently, the couplings between the subsystems are regarded as disturbances. We initiate the method with $x_i \in \mathcal{Z}(0,G_i^{x})$ and $u_i \in \mathcal{Z}(0,G_i^{u})$ for all $i$. Therefore, for subsystem $i$ the disturbance set can be seen as: 
\begin{multline} \label{eq:distrb1}
    \bigoplus_{j\ne i}{A_{ij}\mathcal{Z}(0, G_j^{x})} \oplus \bigoplus_{j\ne i}{B_{ij}\mathcal{Z}(0, G_j^{u})} \oplus  \mathcal{Z}(0, G_i^{d}).
\end{multline}
Since zonotopes are closed under Minkowski sum, we can rewrite the above expression as:
\begin{multline} \label{eq:distrb2}
    \mathcal{Z}(0, \underbrace{[A_{i1}G_1^{x}, \cdots, A_{ij}G_j^{x}, B_{i1}G_1^{u} , \cdots, B_{ij}G_j^{u},G_i^{d}]}_{G_i^{extend}}),
\end{multline} 
which brings the system in (\ref{eq_subsystems}) to the following form:
\begin{equation} \label{eq:extendsys}
    x_i(t+1)=A_{ii}x_i(t) + B_{ii}u_i(t) + d_i^{extend},
\end{equation}
where $d_i^{extend} \in \mathcal{Z}(0,G_i^{extend})$.
By using Algorithm \ref{RCI}, the RCI set and action set can be computed. In order to have shrinking RCI sets, the last two constraints in (\ref{eq:optmz}) are modified to reflect the new bounds for the state and control inputs, so that the new RCI and action sets become subsets of the ones in the last iteration.
\subsection{Zonotope Order Reduction}
Zonotope order reduction methods help us to over-approximate a zonotope $\mathcal{Z}(c,H)$ by a zonotope with fewer generators (smaller order). This technique is useful in recursive methods that increase zonotope order, as operations with high order zonotopes are computationally expensive.\\
There are several methods for zonotope order reduction \cite{Kopetzki2018}, \cite{Yang2018}. \textit{Boxing method} \cite{Kiihn1998}, \cite{C.Combastel} over-approximates the zonotope by a box that has the following upper and lower bounds:
\begin{equation} \label{eq:box}
    c - \sum_{i=1}^p{|h_i|} \leq \mathcal{Z}(c,H) \leq c + \sum_{i=1}^p{|h_i|},
\end{equation}
where $H \in \mathbb{R}^{n\times p}$ and $h_i$ is its generator. These bounds correspond to a box that is represented by \text{reduce$(H)$}. We dropped $c$, because it is always zero in our case. So,
\begin{equation} \label{reduction}
    \text{reduce$(H):=\diag(\sum_{i}{|h_i|})$}.
\end{equation}
Since the order of disturbance in (\ref{eq:extendsys}) is large and gets even larger with the number of iterations, we use the zonotope order reduction method to improve computational complexity. So, the disturbance set can be replaced by $\mathcal{Z}(0,\text{reduce}(G_i^{extend}) )$. Note that zonotope order reduction methods give an over-approximation of the input zonotope, therefore, a tighter over-approximation of the disturbance set results in a less conservative solution. In this paper, we used the boxing method for zonotope order reduction. 
The iterative procedure is shown in Algorithm \ref{euclid1}.
\makeatletter
\def\BState{\State\hskip-\ALG@thistlm}
\makeatother
\begin{algorithm}
\caption{Compositional decentralized RCI Algorithm}\label{euclid1}
\begin{algorithmic}[1]
\Require {$A_{ii},A_{ij},B_{ii},B_{ij},G_i^x,G_i^u,G_i^d , \forall i,j \in \mathcal{I}$}
\State Initializing: $T_i \leftarrow G_i^x , M_i \leftarrow G_i^u$
\While {$\textit{ all }T_i \textit{ and } \textit{all }M_i$ \textit{Convergence} }
    \For {$i \in \mathcal{I}$} 
        \State $G_i^{extend} \leftarrow (\ref{eq:distrb2})$
        \State $G_i^{new}  \leftarrow$ reduce($G_i^{extend}$)
        \State $T_i, M_i \leftarrow \textbf{Single RCI }(A_{ii},B_{ii},T_i,M_i,G_i^{new})$
\EndFor
\EndWhile
\Return $\mathcal{Z}(0,T_i) , \mathcal{Z}(0,M_i), \forall i \in \mathcal{I}$

\end{algorithmic}
\end{algorithm}

\subsection{Computational Complexity}
For polynomial-time interior-point LP solvers, the worst case complexity is $\mathcal{O}(\beta^{3.5})$, where $\beta$ is the number of variables and constraints \cite{Karmarkar1984}. Considering system (\ref{eq_subsystems}) and using the boxing method, the number of variables and constraints for each single LP in Algorithm \ref{euclid1} is:
 \begin{equation} \label{eq:complexity}
     \mathcal{O}([n_i^2 + n_i + m_i + (10n_i + 6m_i)k_i]^{3.5}),
 \end{equation}
 where $k_i$ is the number of columns in the RCI generator, typically observed to be at the same order of the contollability index of $(A_{ii},B_{ii})$. The total number of LPs to be solved is $|\mathcal{I}|$ times the number of iterations in Algorithm \ref{euclid1}. 
\section{Case Studies}
\label{sec:examples}
In this section, we consider three case studies. In the first case, we compare our method to \cite{Nilsson2016}. In the second one, we demonstrate the scalabilty of our approach on a large system. Finally, we utilize our method to control an HVAC system to show its applicability in engineering problems.
\subsection{Case study 1}
This example is adopted from \cite{Nilsson2016}, where the authors considered the following dynamics for each subsystem:\\
\begin{equation}
    x_i^+ = \alpha_i R(\theta)x_i(t) + u_i(t)+ \sum_{i\neq j}{\beta_{ij}x_j}+d_i(t),
\end{equation}
where $R(\theta)$ is the counter-clockwise rotation matrix in $\mathbb{R}^2$, $x_i(t),u_i(t),d_i(t)\in \mathbb{R}^2$ are the state, control input, and disturbance. The control input and disturbance are bounded: $||u_i||_{\infty}\leq 0.65  $ and $||d_i||_{\infty}\leq 0.4$. Similar to \cite{Nilsson2016}, we consider three subsystems. We use $\alpha_i= 0.8$ and $\beta_{ij}$ can take two values: $0.1$ if $|i-j|=1$ and $0$ otherwise.
The results of our method are illustrated in Fig.~\ref{fig:compare}, while the results of \cite{Nilsson2016} are shown in Fig.~\ref{figurelabe2}. The advantages of our method against \cite{Nilsson2016} are as follows: i) we obtain smaller RCI sets, ii) it is faster since it is compositional, and iii) it does not assume any specific shape for the input controllers - in \cite{Nilsson2016}, the authors assume the controllers to be linear. They reported 11 seconds of computation time whereas our method took 0.25 second on a MacBook Pro 2.6 GHz Intel Core i7.
   \begin{figure}[t]
      \centering
      \subfloat[subsystem 1]{\includegraphics[scale=0.19]{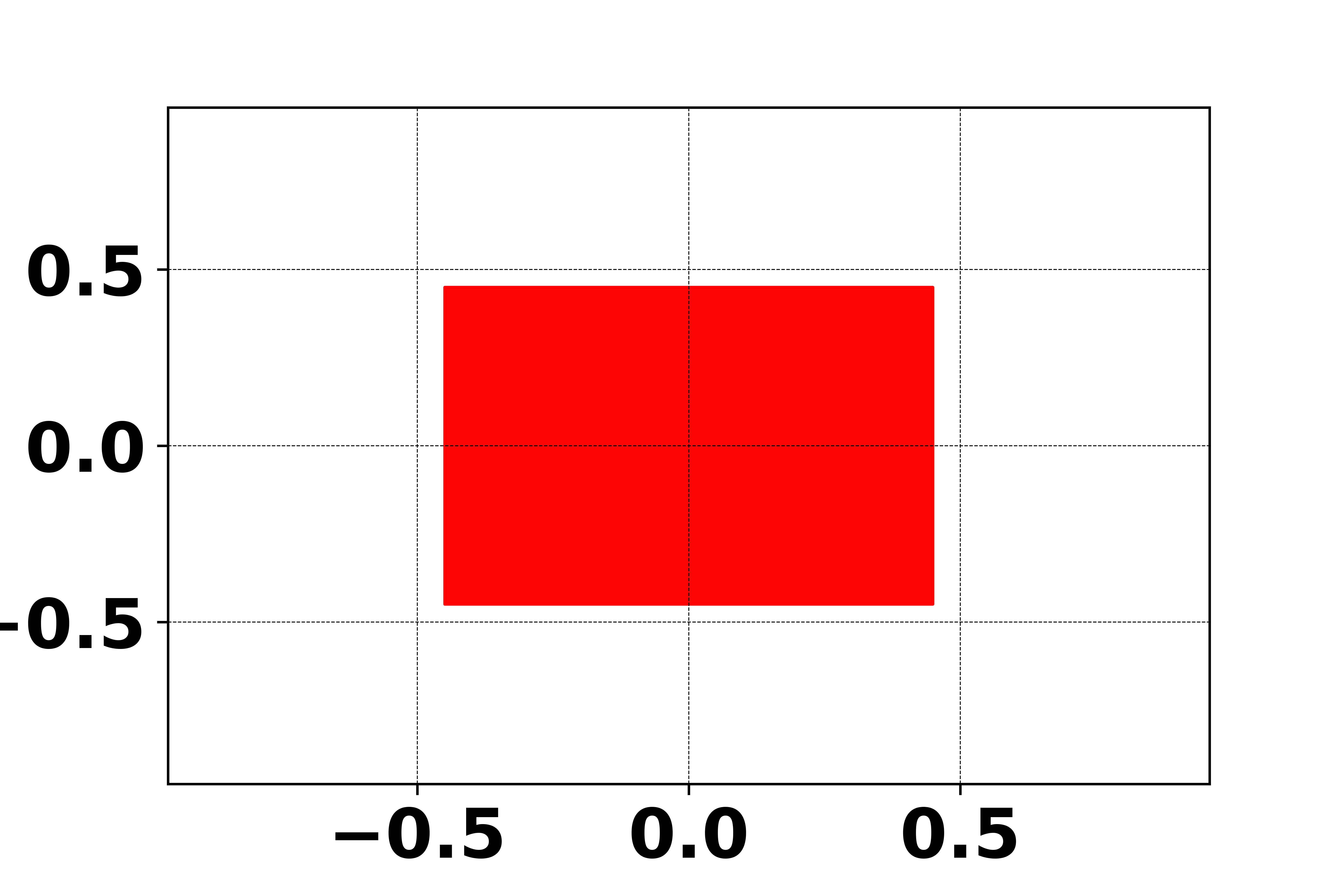}}
      \subfloat[subsystem 2]{\includegraphics[scale=0.19]{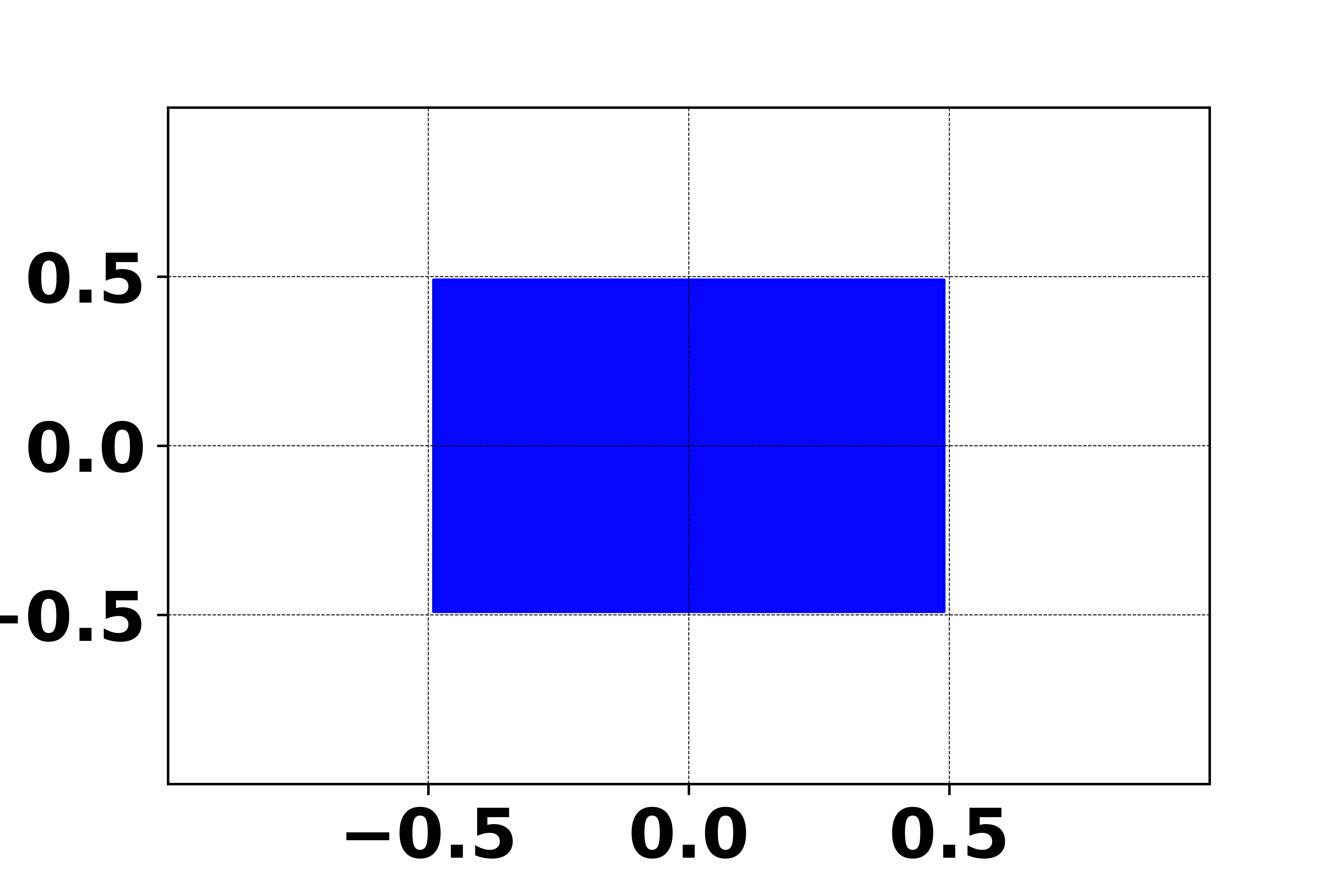}}
      \subfloat[subsystem 3]{\includegraphics[scale=0.19]{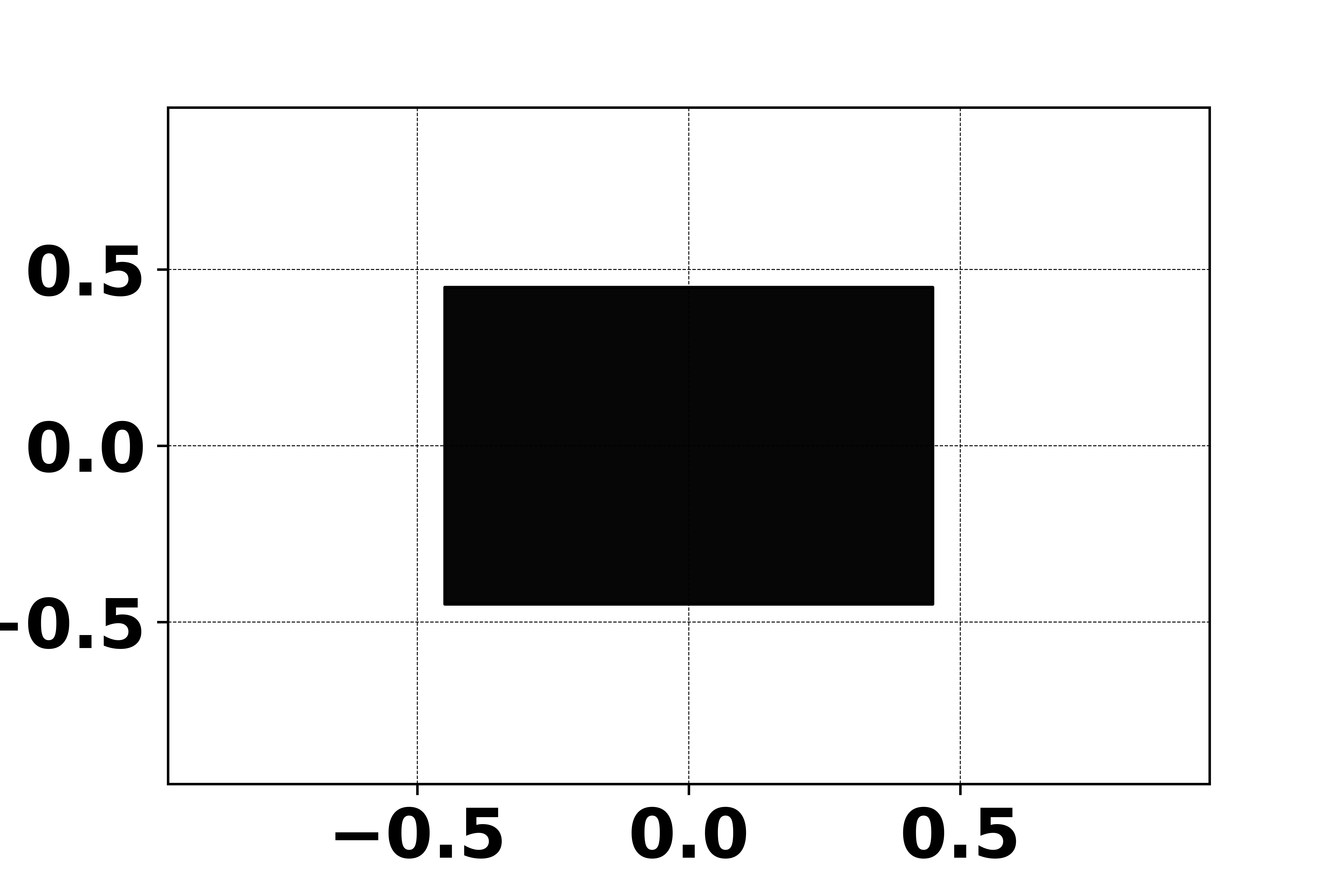}}
      \caption{The RCI sets obtained for each subsystem in case study 1 using our approach.}
      \label{fig:compare}
   \end{figure}
   \begin{figure}[t]
      \centering
      \includegraphics[scale=0.48]{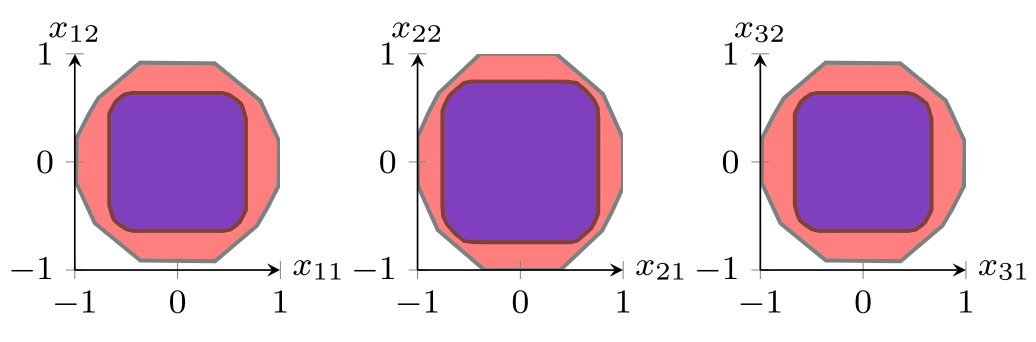}
      \caption{The RCI sets presented in \cite{Nilsson2016}. Subsystems 1 to 3 are shown from left to right. The RCI sets are in red (the blue areas are the possible states in the next time step). A comparison with Fig.~\ref{fig:compare} shows that our approach results in smaller RCI sets.}
      \label{figurelabe2}
   \end{figure}
\subsection{Case study 2}
Here we test our approach on a large scale system. Inspired by \cite{Motee2008}, we generate a random  large system. In \cite{Motee2008}, the authors initially scatter random points in a square field and assign each potint to a subsystem. If the Euclidean distance between any two points is less than a user-defined threshold ($R$), they are considered as neighbors. In order to generate more asymmetry, we distribute two classes of subsystems where each class has its own $A_{ii}$.
   \begin{figure}[t]
      \centering
      \subfloat[\label{fig:nader}]{\includegraphics[scale=0.325]{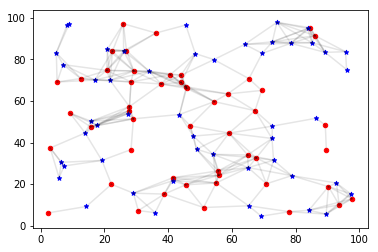}}
      \subfloat[\label{fig:nader2}]{\includegraphics[scale=0.31]{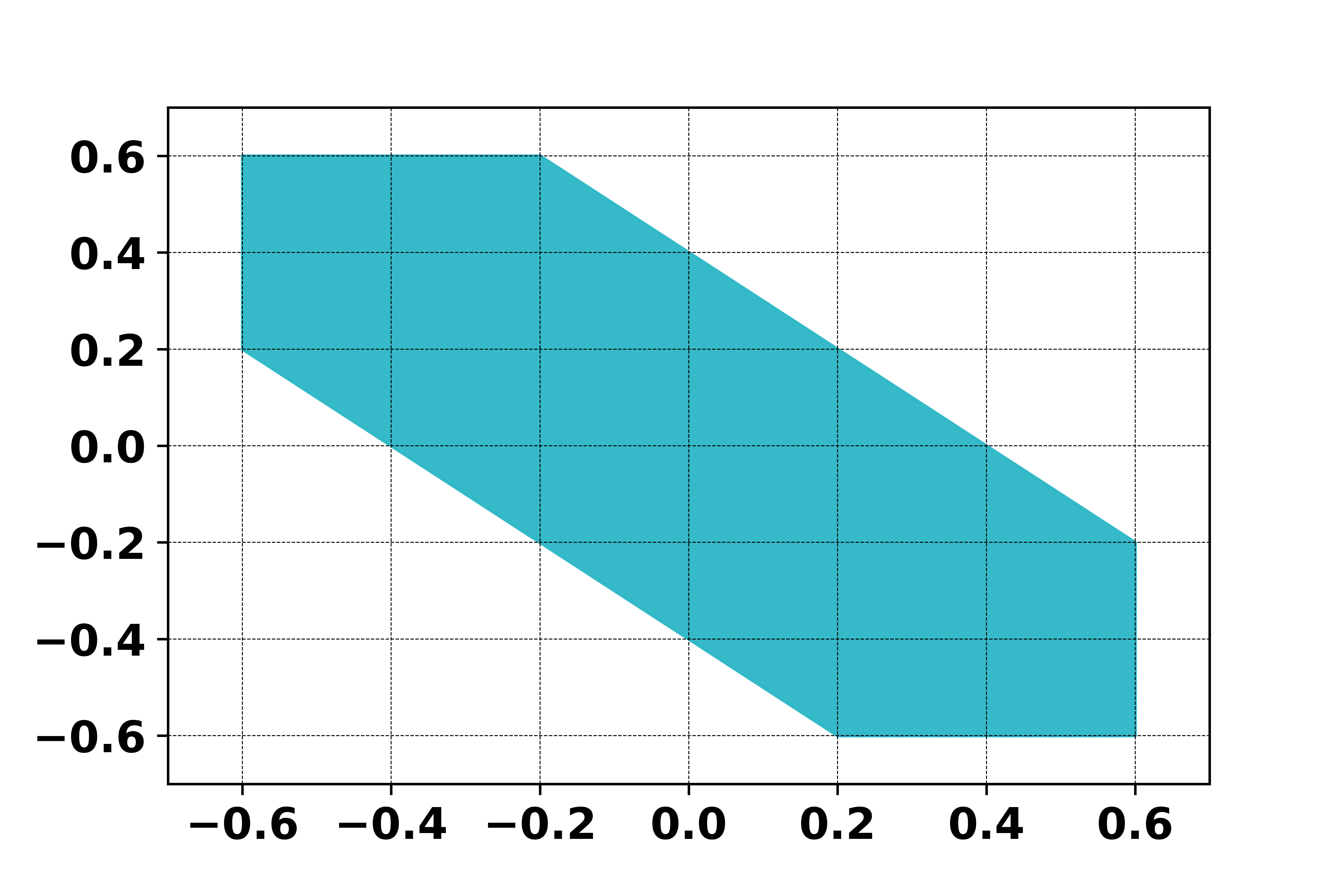}}
      \caption{Case study 2: (a) an example of a random system. The threshold is 15 and if two points are connected with an edge, they are neighbors. (b) An RCI set for one of the subsystems when the state dimension in the aggregated system is 100, $\lambda=100$, and $R=10$.}
   \end{figure}  
As illustrated in Fig.~\ref{fig:nader}, we distribute equal numbers of red and blue subsystems in the field. The dynamics for each subsystem is:
\begin{equation} \label{eq:ex2}
    x_i^+ = A_{ii}x_i(t) + B_{ii}u_i(t) + d_i(t)+ \sum_{j \neq i}{A_{ij}x_j(t)},
\end{equation}
where $A_{ii}$s are $\begin{bmatrix} 
    1 & 1\\
    1 & 2
    \end{bmatrix}$ and $\begin{bmatrix} 
    1 & 1\\
    0 & 1
    \end{bmatrix}$ for red and blue subsystems, respectively. For all subsystems, $B_{ii}$ is $\begin{bmatrix} 
    0\\
    1
    \end{bmatrix} $. If subsystems $i$ and $j$ are not neighbors, $A_{ij}=0$. Otherwise: 
\begin{equation} \label{eq: lambda}
    A_{ij} = \dfrac{\lambda}{\dist(i,j)}
    \begin{bmatrix} 
    1 & 0\\
    0 & 1
    \end{bmatrix},
\end{equation}
where $\lambda$ is a constant and $\dist(i,j)$ is Euclidean distance between points $i$ and $j$ in Fig.~\ref{fig:nader}.
The following constraints are imposed on (\ref{eq:ex2}):
\begin{multline}
    x_i(t) \in \mathcal{Z}(0,10I_2), u_i(t) \in \mathcal{Z}(0,10I_1),\\ d_i(t) \in \mathcal{Z}(0, 0.2I_2).
\end{multline}
The execution time for finding RCI sets in different conditions are reported in Table~\ref{tab:tabel1}. Since the size of the field is fixed, as the number of subsystems increase, the couplings get stronger. For very large couplings, the disturbance becomes very large such that it leads to infeasiblity in LPs. Thus, we alter $\lambda$ and $R$ to adjust the disturbances to maintain feasibility. Finally, as a sample, one of the RCI sets is illustrated in Fig.~\ref{fig:nader2}.   
\begin{table}[]
\caption{ Case study 2 }
\begin{tabular}{|c|c|c|c|c|}
\hline
\begin{tabular}[c]{@{}c@{}}State dimension\\of \\aggregated system\end{tabular} & $\lambda$     & $R$ & \begin{tabular}[c]{@{}c@{}}Number \\of\\ iterations\end{tabular} & \begin{tabular}[c]{@{}c@{}}Total execution\\ time\\ (Seconds)\end{tabular} \\ \hline
100                                                                      & 0.01  & 1      & 3                                                              & 0.36                                                                 \\ \hline
100                                                                      & 0.01  & 10     & 3                                                              & 0.95                                                                 \\ \hline
100                                                                      & 0.001 & 1      & 3                                                              & 0.36                                                                 \\ \hline
100                                                                      & 0.001 & 10     & 3                                                              & 0.95                                                                 \\ \hline
1000                                                                     & 0.01  & 1      & 5                                                              & 6.64                                                                 \\ \hline
1000                                                                     & 0.01  & 10     & 5                                                              & 225                                                                  \\ \hline
1000                                                                     & 0.001 & 1      & 3                                                              & 3.93                                                                 \\ \hline
1000                                                                     & 0.001 & 10     & 3                                                              & 137.22                                                               \\ \hline
\end{tabular}
\label{tab:tabel1}
\end{table}

\subsection{Case study 3}
As the third example, we consider the control of a HVAC system. We require that the temperature in a building follows a desired trajectory so that we can save energy during off-hours and also maintain the desired temperature in office hours. We model the HVAC system in the EPIC building at Boston university. There are 6 separate spaces on the ground floor. The floor plan is shown in Fig.~\ref{fig:casestudy3}a . We borrow the model from \cite{Long2016} and \cite{Wang2018}, which is the following:
\begin{figure}[t]
\centering
\begin{tikzpicture}[scale=0.55, every node/.style={scale=0.55}]
\draw[black, thick] (0,0) rectangle (5,2);
\draw[black, thick] (0,0) rectangle (1.7,-2);
\draw[black, thick] (1.7,0) rectangle (3.3,-2);
\draw[black, thick] (3.3,0) rectangle (5,-2);
\draw[black, thick] (5,2) rectangle (7,0);
\draw[black, thick] (5,2) rectangle (8,-2);
\node [above,align=left] at (2,0.5) {room 6 \\ Hallway};
\node [above,align=left] at (5.9,0.3) {room 5\\Conference\\room};
\node [above,align=left] at (0.9,-2) {room 1\\Carpentry\\room};
\node [above,align=left] at (2.5,-2) {room 2\\Casting\\room};
\node [above,align=left] at (4.2,-2) {room 3\\Robotics\\ Lab};
\node [scale=1.5] at (4.2,-3.2) {(a)};
\node [above,align=left] at (6.5,-1.5) {room 4\\Offices};
\node at (12,0){\includegraphics[width=.4\textwidth]{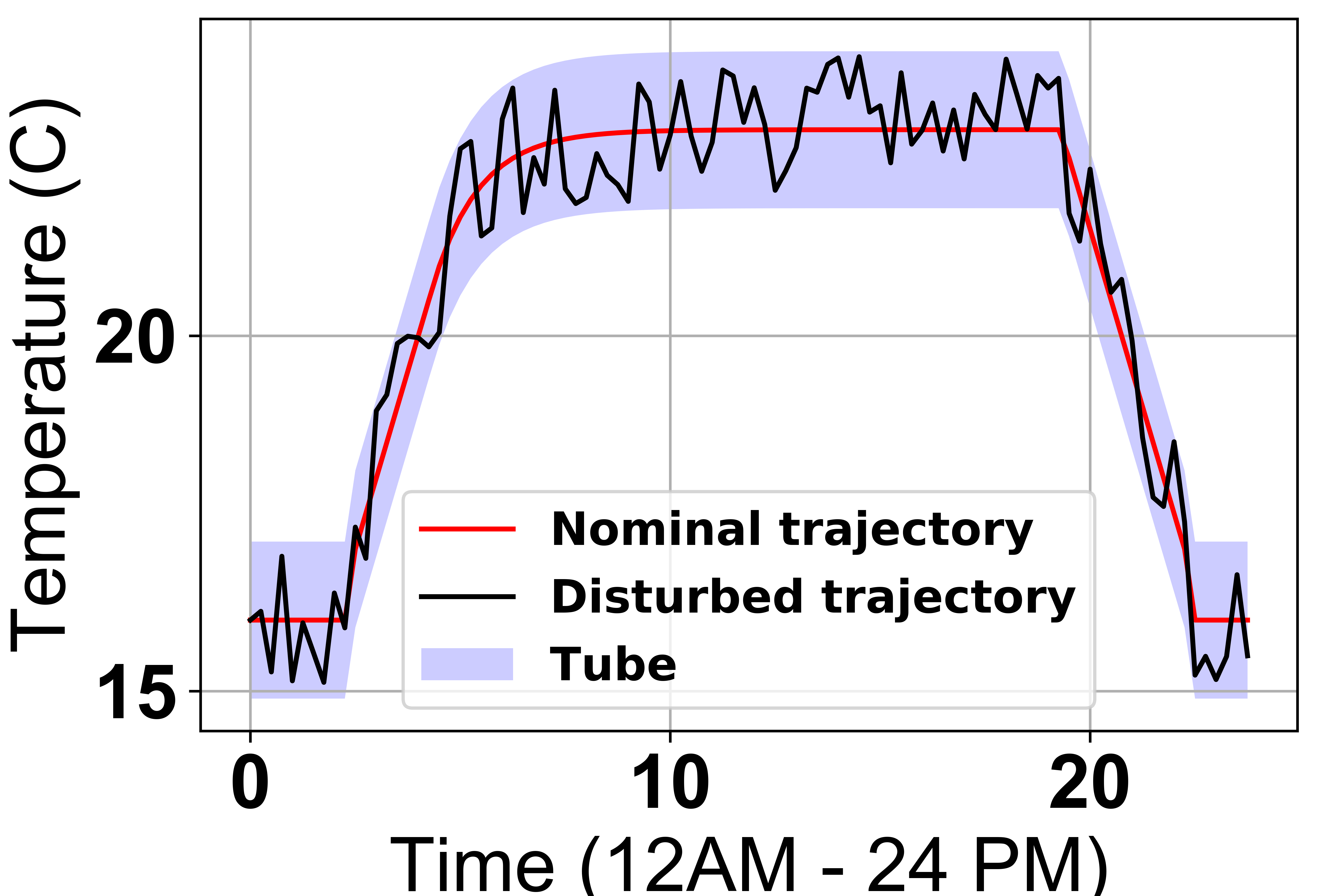}};
\node [scale=1.5] at (12,-3.2) {(b)};
\end{tikzpicture}
\caption{Case study 3: (a) The plan of EPIC building at Boston University. (b) nominal trajectory, tube, and disturbed trajectory for subsystem (room) 6. The disturbed trajectory remains inside the tube for all time. The other subsystems have similar trajectories.} \label{fig:casestudy3}
\end{figure}
\begin{equation}
    T_i^+=A_{ii}T_i(t)+ \sum_{j \in \mathcal{N}_i}{A_{ij}T_j(t)}+ B_iu_i(t) + E_id_i(t),
\end{equation}
where $T_i$, the state of subsystem $i$ is the temperature of room $i$, and $u_i$, the control input is the rate of energy HVAC system releases in its surroundings; $d_i$ is the disturbance that can be the result of transferring heat to the outside environment, human bodies, or other unknown factors; $\mathcal{N}_i$ is a set of rooms neighbouring room $i$. The matrices are:
\begin{multline}
    A_{ii} = 1 - \dfrac{\Delta \tau}{c_i}(\sum_{j\in \mathcal{N}_i}{\dfrac{1}{R_{i,j}}} + \dfrac{1}{R_{o,i}} ) ,
    A_{ij} = \dfrac{\Delta \tau}{R_{i,j}c_i} ,
    B_i = \dfrac{\Delta \tau}{c_i}\\
    , E_i = \dfrac{\Delta \tau}{c_i},
\end{multline}
where $\Delta \tau$ is the time step assumed to be $15$ minutes. $c_i=1.375 \times 10^3 \dfrac{KJ}{K}$ is the thermal capacitance of the air in room $i$. $R_{i,j} = 14\dfrac{KW}{K} $ and $R_{o,i}= 50\dfrac{KW}{K}$ are the thermal resistance between room $i$ and $j$ and the thermal resistance between room $i$ and the outside environment, respectively. The bounds on each $u_i(t)$ and $d_i(t)$ are:
\begin{equation}
    0\leq u_i(t) \leq 9  , 
    -1.6 \leq d_i(t) \leq 1.6.
\end{equation}
We use tube MPC \cite{Rakovicrakovic},\cite{Mayne2014}, which takes advantage of the superposition principle of linear systems and divides the system into two. One is nominal, which tracks the trajectory, while the other one keeps the state of the system inside a tube around the nominal trajectory. The cross section of the tube at each time step is the RCI set. The results for room 6 are shown in Fig.~\ref{fig:casestudy3}b.
\section{Conclusions And Future Work}
We presented a compositional method to derive minimal approximation of decentralized robust control invariant sets for discrete-time linear time-invariant systems. Our approach uses iterative linear programs, which makes it very fast and applicable to large systems by breaking them into subsystems. Our method does not rely on linear controllers and it is correct-by-construction.

In future work, we will investigate using different methods for \textit{zonotope order reduction} to get tighter over-approximation of disturbances hence reducing conservatism. In addition, we will consider extending our method to deal with nonlinear and hybrid dynamics, with a particular focus on piecewise affine models.


\bibliographystyle{IEEEtran}    
\bibliography{ref.bib}

\begin{thebibliography}{10}
\providecommand{\url}[1]{#1}
\csname url@rmstyle\endcsname
\providecommand{\newblock}{\relax}
\providecommand{\bibinfo}[2]{#2}
\providecommand\BIBentrySTDinterwordspacing{\spaceskip=0pt\relax}
\providecommand\BIBentryALTinterwordstretchfactor{4}
\providecommand\BIBentryALTinterwordspacing{\spaceskip=\fontdimen2\font plus
\BIBentryALTinterwordstretchfactor\fontdimen3\font minus
  \fontdimen4\font\relax}
\providecommand\BIBforeignlanguage[2]{{%
\expandafter\ifx\csname l@#1\endcsname\relax
\typeout{** WARNING: IEEEtran.bst: No hyphenation pattern has been}%
\typeout{** loaded for the language `#1'. Using the pattern for}%
\typeout{** the default language instead.}%
\else
\language=\csname l@#1\endcsname
\fi
#2}}

\bibitem{al2010experimental}
A.~Al~Alam, A.~Gattami, and K.~H. Johansson, ``An experimental study on the
  fuel reduction potential of heavy duty vehicle platooning,'' in \emph{13th
  International IEEE Conference on Intelligent Transportation Systems}.\hskip
  1em plus 0.5em minus 0.4em\relax IEEE, 2010, pp. 306--311.

\bibitem{Borrelli2009}
F.~Borrelli, C.~{Del Vecchio}, and A.~Parisio, ``{Robust invariant sets for
  constrained storage systems},'' \emph{Automatica}, 2009.

\bibitem{yousefi2019formalized}
M.~Yousefi, K.~van Heusden, N.~West, I.~M. Mitchell, J.~M. Ansermino, and G.~A.
  Dumont, ``A formalized safety system for closed-loop anesthesia with
  pharmacokinetic and pharmacodynamic constraints,'' \emph{Control Engineering
  Practice}, vol.~84, pp. 23--31, 2019.

\bibitem{afram2014theory}
A.~Afram and F.~Janabi-Sharifi, ``Theory and applications of hvac control
  systems--a review of model predictive control (mpc),'' \emph{Building and
  Environment}, vol.~72, pp. 343--355, 2014.

\bibitem{Blanchini1999}
F.~Blanchini, ``{Set invariance in control The paper provides a survey of the
  literature on invariant sets and their applications},'' Tech. Rep., 1999.

\bibitem{kerrigan2001robust}
E.~C. Kerrigan, ``Robust constraint satisfaction: Invariant sets and predictive
  control,'' Ph.D. dissertation, University of Cambridge, 2001.

\bibitem{Rakovic2}
S.~V. Rakovic and D.~Q. Mayne, ``{Set Robust Control Invariance for Linear
  Discrete Time Systems},'' Tech. Rep.

\bibitem{mayne2011tube}
D.~Q. Mayne, E.~C. Kerrigan, E.~Van~Wyk, and P.~Falugi, ``Tube-based robust
  nonlinear model predictive control,'' \emph{International Journal of Robust
  and Nonlinear Control}, vol.~21, no.~11, pp. 1341--1353, 2011.

\bibitem{yu2013tube}
S.~Yu, C.~Maier, H.~Chen, and F.~Allg{\"o}wer, ``Tube mpc scheme based on
  robust control invariant set with application to lipschitz nonlinear
  systems,'' \emph{Systems \& Control Letters}, vol.~62, no.~2, pp. 194--200,
  2013.

\bibitem{rawlings2009model}
J.~B. Rawlings and D.~Q. Mayne, \emph{Model predictive control: Theory and
  design}.\hskip 1em plus 0.5em minus 0.4em\relax Nob Hill Pub. Madison,
  Wisconsin, 2009.

\bibitem{Rakovicrakovic}
S.~V. Rakovic and D.~Q. Mayne, ``A simple tube controller for efficient robust
  model predictive control of constrained linear discrete time systems subject
  to bounded disturbances ⋆,'' Tech. Rep.

\bibitem{Mayne2014}
D.~Q. Mayne, ``{Model predictive control: Recent developments and future
  promise},'' 2014.

\bibitem{Rakovic2007}
S.~V. Rakovic, E.~C. Kerrigan, D.~Q. Mayne, and K.~I. Kouramas, ``{Optimized
  robust control invariance for linear discrete-time systems: Theoretical
  foundations},'' \emph{Automatica}, vol.~43, no.~5, pp. 831--841, 2007.

\bibitem{rakovic2010parameterized}
S.~V. Rakovic and M.~Baric, ``Parameterized robust control invariant sets for
  linear systems: Theoretical advances and computational remarks,'' \emph{IEEE
  Transactions on Automatic Control}, vol.~55, no.~7, pp. 1599--1614, 2010.

\bibitem{sadraddini2017provably}
S.~Sadraddini, S.~Sivaranjani, V.~Gupta, and C.~Belta, ``Provably safe cruise
  control of vehicular platoons,'' \emph{IEEE Control Systems Letters}, vol.~1,
  no.~2, pp. 262--267, 2017.

\bibitem{Kopetzki2018}
A.~K. Kopetzki, B.~Schurmann, and M.~Althoff, ``{Methods for order reduction of
  zonotopes},'' \emph{2017 IEEE 56th Annual Conference on Decision and Control,
  CDC 2017}, vol. 2018-Janua, no. Cdc, pp. 5626--5633, 2018.

\bibitem{Yang2018}
X.~Yang and J.~K. Scott, ``{A comparison of zonotope order reduction
  techniques},'' \emph{Automatica}, 2018.

\bibitem{kim2015compositional}
E.~S. Kim, M.~Arcak, and S.~A. Seshia, ``Compositional controller synthesis for
  vehicular traffic networks,'' in \emph{2015 54th IEEE Conference on Decision
  and Control (CDC)}.\hskip 1em plus 0.5em minus 0.4em\relax IEEE, 2015, pp.
  6165--6171.

\bibitem{kim2016directed}
------, ``Directed specifications and assumption mining for monotone dynamical
  systems,'' in \emph{Proceedings of the 19th International Conference on
  Hybrid Systems: Computation and Control}.\hskip 1em plus 0.5em minus
  0.4em\relax ACM, 2016, pp. 21--30.

\bibitem{sadraddini2017formal}
S.~Sadraddini, J.~Rudan, and C.~Belta, ``Formal synthesis of distributed
  optimal traffic control policies,'' in \emph{2017 ACM/IEEE 8th International
  Conference on Cyber-Physical Systems (ICCPS)}.\hskip 1em plus 0.5em minus
  0.4em\relax IEEE, 2017, pp. 15--24.

\bibitem{Nilsson2016}
P.~Nilsson and N.~Ozay, ``{Synthesis of separable controlled invariant sets for
  modular local control design},'' \emph{Proceedings of the American Control
  Conference}, vol. 2016-July, pp. 5656--5663, 2016.

\bibitem{scialanga2018robust}
S.~Scialanga and K.~Ampountolas, ``Robust constrained interpolating control of
  interconnected systems,'' in \emph{2018 IEEE Conference on Decision and
  Control (CDC)}.\hskip 1em plus 0.5em minus 0.4em\relax IEEE, 2018, pp.
  7016--7021.

\bibitem{Sadraddini2}
S.~Sadraddini and R.~Tedrake, ``Linear encodings for polytope containment
  problems,'' \emph{arXiv preprint arXiv:1903.05214}, 2019.

\bibitem{Kiihn1998}
W.~Kiihn, ``{Rigorously Computed Orbits of Dynamical Systems without the
  Wrapping Effect},'' Tech. Rep., 1998.

\bibitem{C.Combastel}
{C. Combastel}, ``A state bounding observer based on zonotopes. in proc. of the
  european control conference,'' p. 2589–2594, 2003.

\bibitem{Karmarkar1984}
N.~Karmarkar, ``{A{\~{}}New Polynomial-Time Algorithm for Linear
  Programming},'' Tech. Rep., 1984.

\bibitem{Motee2008}
N.~Motee and A.~Jadbabaie, ``{Optimal Control of Spatially Distributed
  Systems},'' vol.~53, no.~7, pp. 1616--1629, 2008.

\bibitem{Long2016}
Y.~Long, S.~Liu, L.~Xie, and K.~H. Johansson, ``{A hierarchical distributed MPC
  for HVAC systems},'' \emph{Proceedings of the American Control Conference},
  vol. 2016-July, pp. 2385--2390, 2016.

\bibitem{Wang2018}
Z.~Wang, G.~Hu, and C.~J. Spanos, ``{Distributed model predictive control of
  bilinear HVAC systems using a convexification method},'' \emph{2017 Asian
  Control Conference, ASCC 2017}, vol. 2018-Janua, no.~3, pp. 1608--1613, 2018.

\end{thebibliography}

\end{document}